\newcommand{\bega}{\begin{eqnarray}}
\newcommand{\ega}{\end{eqnarray}}
\newcommand{\bb}{\begin{equation}}
\newcommand{\ee}{\end{equation}}
\newtheorem{defn} {Definition}
\newtheorem{lema}{Lemma}
\newtheorem{cor}{Corollary}
\begin{document}

\title{Worst Configurations (Instantons) for Compressed Sensing over Reals:\\ a Channel Coding Approach}
\author{
\authorblockN{Shashi Kiran Chilappagari}
\authorblockA{Marvell Semiconductor Inc.\\
Santa Clara, CA 95054, USA\\
Email: shashickiran@gmail.com}%
\and
\authorblockN{Michael Chertkov}
\authorblockA{Theoretical Divison
\& CNLS, LANL\\
\& New Mexico Consortium,\\
Los Alamos, NM 87545(4), USA\\
Email: chertkov@lanl.gov}
\and
\authorblockN{Bane Vasic}
\authorblockA{Dept. of ECE\\
University of Arizona\\
Tucson, AZ 85721, USA\\
Email: vasic@ece.arizona.edu}}

\markboth{To appear in proceedings of ISIT 2010}
{Chilappagari, Chertkov \& Vasic: Worst Configurations for Compressed Sensing over Reals}
\maketitle

\begin{abstract}
We consider the Linear Programming (LP) solution of the Compressed Sensing (CS) problem over reals, also known as the Basis Pursuit (BasP) algorithm. The BasP allows interpretation as a channel-coding problem, and it guarantees error-free reconstruction with a properly chosen measurement matrix and sufficiently sparse error vectors. In this manuscript, we examine how the BasP performs on a given measurement matrix and develop an algorithm to discover the sparsest vectors for which the BasP fails. The resulting algorithm is a generalization of our previous results on finding the most probable error-patterns degrading performance of a finite size Low-Density Parity-Check (LDPC) code in the error-floor regime. The BasP fails when its output is different from the actual error-pattern. We design a CS-Instanton Search Algorithm (ISA) generating a sparse vector, called a CS-instanton, such that the BasP fails on the CS-instanton, while the BasP recovery is successful for any modification of the CS-instanton replacing a nonzero element by zero. We also prove that, given a sufficiently dense random input for the error-vector, the CS-ISA converges to an instanton in a small finite number of steps. The performance of the CS-ISA is illustrated on a randomly generated $120\times 512$ matrix. For this example, the CS-ISA outputs the shortest instanton (error vector) pattern of length $11$.
\end{abstract}

\begin{keywords}
Compressed Sensing, Low-Density Parity-Check Codes, Linear Programming Decoding, Error-floor
\end{keywords}

\section{Introduction}
\label{sec:intro}
\begin{figure}[t]
\centering
\subfigure[]
{
\label{HMatrix}
\includegraphics[width=0.5\textwidth]{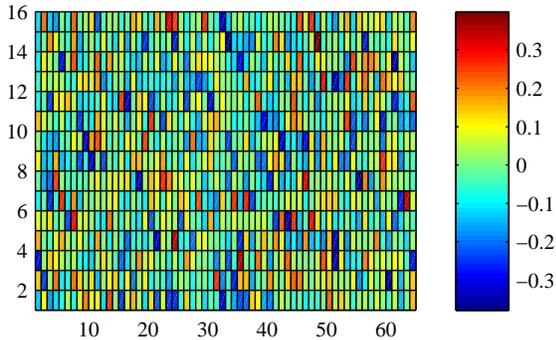}
}
\hspace{-0.3in}
\subfigure[]
{
\label{ErrorVectors}
\includegraphics[width=0.5\textwidth]{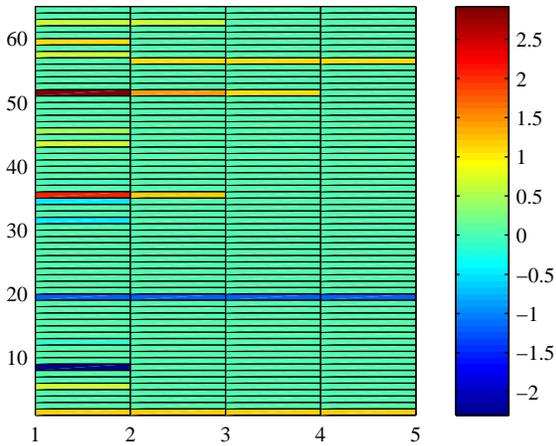}
}
\caption{Fig. (a) shows example of a $(15\times 64)$ measurement matrix, ${F}$, with orthonormal rows. Fig. (b) shows a sequence of error vectors ($e^{(0)}-e^{(3)}$) representing the ISA for the measurement matrix from (a). The first vector (with 15 nonzero elements) was selected at random. It takes four iterations of the ISA (four columns in the Fig. (b))  to reach an instanton, i.e. the error-pattern for which BasP decoding fails, containing only three nonzero elements.
\label{fig:first}}
\end{figure}

\subsection{Background}
Compressed sensing (CS) \cite{06Donoho} is a computational technique to recover a sparse signal from a small set of measurements. Given the measurements, described in terms of the so-called measurement matrix, the ideal CS looks for the sparsest signal, i.e. minimizing the $\ell_0$-norm, consistent with the measurements. This ideal formulation is known to be NP-hard \cite{06Donoho_2,95Natarajan}. A key practical observation of the CS theory came with the Basis Pursuit (BasP) algorithm of Chen, Donoho and Sanders \cite{99CDS} who suggested relaxing the difficult $\ell_0$-norm minimization to  $\ell_1$-norm minimization, which is convex and thus computationally tractable.  The BasP heuristics have shown remarkable performance,  that was theoretically explained in the breakthrough paper of Candes and Tao \cite{05CT}, stating the CS problem as a linear channel coding problem involving recovery of an input real vector from its corrupted image. In \cite{05CT}, it was proved that the $\ell_1$ relaxation guarantees perfect recovery of the input vector for sufficiently sparse error vectors and a properly chosen measurement matrix. The conditions on the measurement matrix were expressed in terms of the so-called Restricted Isometry Property (RIP). It was also shown in \cite{05CT} that a measurement matrix drawn from a proper random ensemble possesses the RIP with high probability.

Estimation of the fraction of tolerable errors for a random measurement matrix became an important follow up question. However, extending the approach of \cite{05CT} to a given finite measurement matrix is not easy, as checking if the matrix has the RIP becomes computationally hard. This aspect of the CS is reminiscent of similar problem in the so-called expander graph based analysis \cite{96SS} of the Low-Density Parity-Check (LDPC) codes \cite{63Gallager}. In fact, expander-based constructions of CS measurement matrices have been investigated in \cite{09JXHC,08Indyk}.

The relation between LDPC codes and CS was further explored in the work of Dimakis and Vontobel \cite{09DV}, who juxtaposed two linear programs: the BasP of CS which can be restated as a Linear Program (LP) and the so-called LP-decoder \cite{05FWK} of a related LDPC code. 
It was shown in \cite{09DV} that a binary matrix which is a good parity check matrix for LP decoding of the corresponding LDPC code is also a good zero-one measurement matrix over reals for BasP in the respective CS problem. The authors of \cite{09DV} also pointed out other directions in which the relationship between BasP and LP-LDPC can be investigated to translate theoretical guarantees from one field to the other. This challenge, of extending descriptive results from the LDPC world into the world of CS reconstruction, has motivated this work.

\subsection{Overview of Results}
In this paper, we pose the following question: {\it Given a measurement matrix good for the $\ell_1$-norm recovery, how does one find the sparsest vectors for which the $\ell_1$-norm minimization fails?} A similar question arises in decoding of a finite LDPC code by Belief Propagation \cite{88Pearl} or other suboptimal decoding algorithms, notably LP decoding, in the so-called error-floor regime \cite{03Richardson,04CCSV,05SCCV}. Roughly speaking, the error floor is an abrupt degradation of the frame error rate performance of a code in the high signal-to-noise ratio (SNR) regime caused by low-weight error patterns not correctable by the sub-optimal decoder. Algorithms to enumerate such rare noise configurations (known as instantons, i.e. instances in the space of noise configurations) are crucial to the design of better decoders as well as codes. Similar statements can be made in the CS setting, where identifying sparsest error vectors for which the BasP fails is important to assess the quality of a given measurement matrix as well as to guide the design of better measurement matrices.

The main contribution of this paper is the formulation of the CS instanton search algorithm (CS-ISA). Our approach to the problem of identifying instantons for CS consists of extending and adapting the ISA developed originally in \cite{08CS,08CCV,09CVSC} for analysis of the error-floor of a given LDPC code, to the problem of the worst configuration recovery in CS. Given a measurement matrix (e.g. the $15 \times 64$ matrix shown in Fig. \ref{HMatrix}), the algorithm starts with a dense error vector for which the BasP fails and iteratively finds error vectors with a smaller $\ell_0$ norm that result in BasP failure. At some point, the algorithm finds an instanton, defined as a sparse error vector not correctable by BasP such that any of its further reductions in the $\ell_0$ norm is BasP correctable. A typical sequence of CS-ISA, described in Section \ref{sec:instanton}, is illustrated in Fig. \ref{ErrorVectors}.

Initializing ISA with an error vector at random allows us to sample the space of instantons for a given measurement matrix efficiently. We describe the quality of the measurement matrix in terms of the distribution of instantons over their sparsity. (See the bar-diagram of Fig.~\ref{fig:bar} for an illustration.) This distribution is the practical tool that we propose to use as a guide for constructing better measurement matrixes.

Finally, we show that the CS-ISA comes with some theoretical guarantees. We prove in Section \ref{sec:proof} that for any sufficiently noisy initialization, the CS-ISA converges in a finite number of steps,  which is significantly smaller than the measurement matrix size.  (Each step of the CS-ISA requires running a single instance of the BasP algorithm.)

After the submission of the initial version of this manuscript, we have been made aware of related approach by Dossal, Peyr\'e and Fadili \cite{10DPF}, who proposed a greedy pursuit algorithm that computes, for a given measurement matrix, sparse vectors for which the BasP fails.

The rest of the manuscript is organized as follows. Section \ref{sec:pre} introduces the problem setting and related terminology. Section \ref{sec:instanton} describes the CS-ISA. Section \ref{sec:example} illustrates the performance of the CS-ISA in sampling the space of instantons of a randomly generated $(120\times512)$ measurement matrix. We present a brief sketch of the theorem, proving the convergence of the algorithm in a small finite number of steps in Section \ref{sec:proof}. Section \ref{sec:summary} is reserved for discussions.

\section{Compressed Sensing Preliminaries: Problem Setting}\label{sec:pre}

In this Section, we discuss the BasP algorithm adopting formulation and terminology from \cite{99CDS,05CT}. (The interested reader is referred to \cite{CS_webpage} for a comprehensive list of references on CS.) The problem setting is as follows. An original real-valued information vector ${f} \in \mathbb{R}^n$ is transformed (coded) into a longer vector  ${A}{f} \in \mathbb{R}^m$, where ${A} \in \mathbb{R}^{m \times n}$ is a full rank matrix known as the generator matrix. The result, transmitted over the CS-channel, is received as ${y}={A}{f}+ {e}$,  where ${e} \in \mathbb{R}^m$ is the unknown error vector, assumed to be sparse. Recovering the error vector $e$ is sufficient to  reconstruct the information vector ${f}$, as knowledge of $y$ along with $e$ gives $Af$ and hence $f$ can be recovered straightforwardly as $A$ is full rank. Now, consider a $p \times m$ matrix $F$ such that $FA=0$. $F$ will be called the measurement matrix. It follows that $\tilde{y}=Fy=Fe$ and the problem of recovering the sparse error vector $e$ is equivalent to the problem of finding the sparsest vector $d$ subject to $Fd=\tilde{y}$. The CS-decoding/reconstruction is successful if $d=e$. The sparsest solution to this problem can be found by solving the following optimization problem (CS-OPT) \cite{05CT}
\begin{eqnarray}
\left.\min_{{d}\in {R}^m} \lVert{d}\rVert_{\ell_0}\right|_{Fd=Fy},
\label{P0}
\end{eqnarray}
where the $\ell_0$ norm of a vector $d$ measures its sparsity, i.e. the number of nonzero entries, $\lVert{d}\rVert_{\ell_0}=|\{i:d_i\neq 0\}|$.

As stated in Eq.~(\ref{P0}) the problem is NP hard \cite{06Donoho_2,95Natarajan} (of exponential complexity in $m$)  and \cite{99CDS} suggested a relaxed (weaker but tractable) version of the CS-decoding,  coined Basis Pursuit (BasP):
\begin{eqnarray}
\left.\min_{{d}\in {R}^m} \lVert{d}\rVert_{\ell_1}\right|_{{F}{d}=\tilde{y}={F}{e}},
\label{P1}
\end{eqnarray}
where $\ell_0$-norm is replaced by the $\ell_1$-norm, $\lVert{d}\rVert_{\ell_1}=\sum_{i=1}^{m}|d_i|$.
Eq.~(\ref{P1}) can also be recast as an LP.

As shown in \cite{05CT}, the BasP is capable of exact reconstruction, i.e. ${d}={e}$,  under the conditions that (1) the measurement matrix has RIP, and (2) $e$ is sufficiently sparse. Formally, \cite{05CT} states that if the RIP constants $\delta_S,\theta_S$ and $\theta_{S,2S}$ satisfy $\delta_S+\theta_S+\theta_{S,2S}<1$, then for any error vector $e$ with $\lVert e \rVert_{\ell_0}\leq S$, $e$ is the unique solution of both Eq.~(\ref{P0}) and Eq.~(\ref{P1}). However, finding the maximum value of $S$ for which the condition $\delta_S+\theta_S+\theta_{S,2S}<1$ holds is, in general, a difficult problem and the most recent state-of-the-art results provide only an estimate for $S/m$ in the asymptotic limit of large samples \cite{05CT}. Moreover, these estimates for the RIP-constants are normally very loose and cannot be used to evaluate the quality of reconstruction of practically important small-to-moderate sized matrices.

On the other hand, brute force search techniques for finding sparse vectors that lead to BasP failures are prohibitively expensive. In fact, BasP with a well-tuned measurement matrix performs on typical instances of the error-vector really well. Thus, using  standard (Monte Carlo) sampling  techniques will typically not deliver a failure. Hence, there is a need to develop smart techniques sampling the space of failures (called instantons) of the given measurement matrix.

\section{Instanton Search Algorithm for BasP}
\label{sec:instanton}
In this Section, we provide a formal description of the CS-ISA. We say that the BasP fails on a vector $e$ if $e\neq d$, where $d$ solves Eq.~(\ref{P1}). We start with the following two definitions.
\begin{defn}
[Instanton]
Let ${e}$ be a $k$-sparse vector (i.e. the number of nonzero entries in ${e}$ is equal to $k$). Consider an error-vector ${e}'$, derived from ${e}$ by replacing one of its nonzero component by zero (thus ${e}'$ is $(k-1)$-sparse). ${e}$ is an instanton if the BasP fails on $e$ while it succeeds on any  ${e}'$ derived from ${e}$.
\label{def:instanton}
\end{defn}

\begin{defn}[Median]
Let ${e}$ be a vector and let $t$ denote the smallest number such that the sum of the $t$ largest entries (in absolute value) of $e$ is at least equal to $\lVert e\rVert_{\ell_1}/2$.  Let $T=\{i_1,i_2,\ldots,i_t\}$ denote indices of the $t$ largest entries of $e$. Then, the median of ${e}$ is the vector $\hat{e}$ with support $T$ and $\hat{e}_i={e_i}$ for $i \in T$.
\label{def:median}
\end{defn}

\textbf{Fact 1:} If ${e}$ is a vector such that $Fe=0$, then BasP fails on $\hat{e}$. To see this, observe that $\lVert e-\hat{e}\rVert_{\ell_1}\leq \lVert \hat{e} \rVert_{\ell_1}$ and $\lVert e-\hat{e}\rVert_{\ell_0}\geq \lVert \hat{e} \rVert_{\ell_0}$.

Now we are ready to describe the {\bf Instanton Search Algorithm}:
\begin{itemize}
\item \underline{Initialization ($l=0$) step:} Initialize the algorithm to a vector $e^{(0)}$  of length $m$ with a sufficient number of errors such that BasP fails on $e^{(0)}$,  i.e. applied to $e^{(0)}$ BasP produces another vector $\bar{e}^{(0)} \neq e^{(0)}$.
\item \underline{$l \geq 1$ step:} Consider the vector $\tilde{e}^{(l-1)}=e^{(l-1)}-\bar{e}^{(l-1)}$, where $\bar{e}^{(l-1)}$ denotes the output of BasP acting on $e^{(l-1)}$. Let $\hat{e}^{(l-1)}$ denote the median of $\tilde{e}^{(l-1)}$. Only two cases arise (see Lemma \ref{lemma1}): \\ (i)  If $\lVert\hat{e}^{(l-1)}\rVert_{\ell_0} < \lVert e^{(l-1)}\rVert_{\ell_0}$, then $e^{(l)}=\hat{e}^{(l-1)}$ is the $l$-th step output/($l+1$)-th step input. (Note that by Fact 1, BasP fails on $e^{(l)}$)\\
(ii) If $\lVert\hat{e}^{(l-1)}\rVert_{\ell_0} = \lVert e^{(l-1)}\rVert_{\ell_0}$, define $L=\{i_1,i_2,\ldots,i_{k_l}\}$ as the support of $\hat{e}^{(l-1)}$. Let $L_{i_t}=L \backslash i_t$ for some $i_t \in L$. Let $r^{i_t}$ be a vector such that $r^{i_t}_j=\hat{e}^{(l-1)}_j, j \in L_{i_t}$ and contains zero elements elsewhere. Apply BasP to all $r^{i_t}$ and denote the $i_t$ output by $\bar{r}^{i_t}$. If $\bar{r}^{i_t}=r^{i_t}$ for all $i_t$, then $\hat{e}^{(l-1)}$ is the desired instanton and the algorithm halts. Else, $e^{(l)}$ is set to $r^{i_t}$ for some $i_t$ for which $\bar{r}^{i_t} \neq r^{i_t}$.
\end{itemize}

Starting with some random non-sparse initialization of the error vector for which BasP fails, the algorithm aims to get iteratively as close as possible to the zero error vector while keeping the failed status for BasP when it acts on the current vector. In the ISA (which should in fact, for the sake of accuracy, be called ISA for the $l_0$-norm channel), this aim is achieved by alternating the BasP- and the median- steps.
The logic behind the alternation is obvious: when BasP fails we apply the median-step to reduce the size of $\tilde{e}$, defined as the difference between the original $e$ and $\bar{e}$.

We will prove in Section \ref{sec:proof} that the ISA algorithm defined above outputs an instanton in a small number of steps. However, prior to that we illustrate the algorithm performance on an example in the next Section.

\section{Example of the Error-Surface Exploration}
\label{sec:example}
In this Section, we first consider a $15 \times 64 $ measurement matrix, used to show the performance of the ISA in Fig.~\ref{fig:first}. This toy example, described in details in the next paragraph, while not useful in practice, allows a reasonable visualization and thus serves us mainly for illustration purposes. Afterwards, and focused on a tour-de-force demonstration, we discuss a more realistic $120\times 512$ example.

The $15 \times 64$ measurement matrix, shown in Fig.~\ref{HMatrix}, is built by first creating a $15 \times 64$ matrix $H$ with i.i.d. Gaussian entries and then finding the orthonormal basis of the space spanned by the columns of $H$. The dynamics of the ISA is illustrated in Fig.~\ref{ErrorVectors}. The error vector initialization is random and it happens to be with support of size $15$ for the illustrative example. For this example locations of the nonzero components of the initialization are, $[1~5~8~12~19~31~34~35~37~43~45~51~57~59~62]$, and their respective values are, $[1.1738~0.6554~-~2.2990~-~0.1783~-~1.1907~-~0.4254\\~-~0.4768~2.0385 ~-~0.0695~0.6997~0.4137~2.9185~0.6545\\~1.1149~0.6789]$. BasP fails on this error pattern producing the output $\bar{e}^{0}$. In step $1$, we compute the median of $\hat{e}^{0}$ resulting in a vector with support $[1~19~35~51~56~62]$ and components $[1.1738~-1.1907~1.1590~1.3883~1.0888~0.6789]$. Next, we set $e^{(1)}=\hat{e}^{(0)}$. In step $2$, we compute the median of $\hat{e}^{(1)}$ resulting in a vector  with support $[1~19~51~56]$ and components $[1.1738~-1.1907~1.0843~1.0888]$. Hence, we set $e^{(2)}=\hat{e}^{(1)}$. In step $3$, we compute $\hat{e}^{(2)}$ yielding the vector with support $[1~19~56]$ and components $[1.1738~-1.1907~1.0888]$, thus resulting in $e^{(3)}=\hat{e}^{(2)}$. In step $4$, the median of $\tilde{e}^{(3)}$ generates vector with support $[1~19~56]$ and components $[1.1738~-1.1907~1.0138]$. Since,  $\lVert\hat{e}^{(3)}\rVert_{\ell_0} = \lVert e^{(3)}\rVert_{\ell_0}$, we consider all the error vectors derived from $\hat{e}^{(3)}$ by replacing one (of the three) nonzero components by zero. BasP applied to these newly derived vectors decodes correctly,  and thus $\hat{e}^{(3)}$ is declared an instanton of the measurement matrix.

In order to illustrate the effectiveness of the ISA in sampling the space of instantons, we consider a random sample of $(120 \times 512)$ measurement matrix with orthogonal rows. We run the ISA $5000$ times, generating initialization for the error-vector always with $40$ initial errors and record the length of the instanton found for each trial. ($40$ is a sufficiently large integer chosen to guarantee that the majority of random error-vector initializations leads to BasP failure and to an instanton configuration consequently, while only a small portion of initializations of the $\ell_0$-norm $40$ are decoded correctly by BasP,  and these rare and not interesting instances are simply ignored. Note that lowering the initialization sparsity will lead to wasting a lot of initializations as these would be typically decoded by BasP without an error.)  The resulting distribution of the instanton lengths is shown in Fig.~\ref{fig:bar}. The smallest length instanton discovered by the ISA for this example is $11$.
Note that it is still possible that there exist instantons of length less than $11$.

The ISA was implemented in MATLAB and the BasP step in the algorithm was solved using the ``l1-magic" package from \cite{l1magic}. For the aforementioned measurement matrix of size $(120\times 512)$ and the initial error-vector of the $\ell_0$-norm $40$, the average time of the BasP and the ISA runs were $0.2$ seconds and $4.2$ seconds respectively on a (laptop) Intel core duo 2GHz processor.

\begin{figure}
\includegraphics[width=0.5\textwidth]{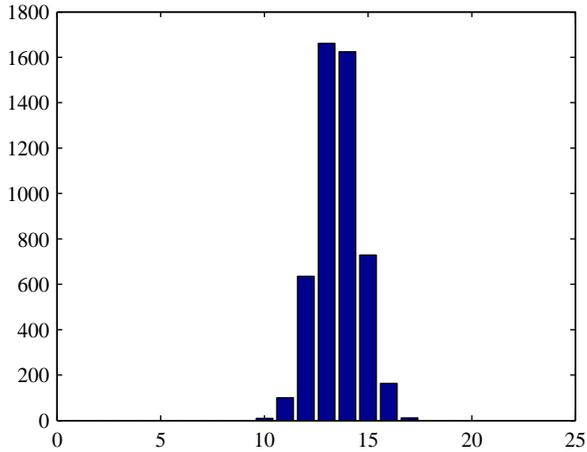}
\caption{Bar-diagram showing number of instance per recorded instanton length found in the result of $5000$ trial runs of the ISA on a random measurement matrix of size $(120\times 512)$ and with random initializations of the sparsity ($\ell_0$-norm) $40$. \label{fig:bar}
}
\end{figure}

\section{Proof of the CS-IS Algorithm Correctness}

In this Section, we establish correctness of the CS-ISA and prove that the ISA outputs an instanton in a finite number of steps.
\label{sec:proof}
\begin{lema}\label{lemma1} Let $e$ be a vector and let $\bar{e}$ denote the output of BasP on $e$. Further, let $\tilde{e}=e-\bar{e}$ and denote the median of $\tilde{e}$ by $\hat{e}$. Then,
\[
\lVert\hat{e}\rVert_{\ell_0}\leq\lVert e\rVert_{\ell_0}
\]
\end{lema}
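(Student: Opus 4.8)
The plan is to exploit two facts: that $\bar{e}$ is the $\ell_1$-minimizer subject to $Fd=Fe$, so $e$ itself is feasible for its own program, and that the median keeps the \emph{largest} entries of $\tilde{e}$. Writing $k=\lVert e\rVert_{\ell_0}$ and $S=\supp(e)$ with $|S|=k$, the target is to bound $t=\lVert\hat{e}\rVert_{\ell_0}$ (the number of entries the median retains) by $k$, since $\lVert\hat{e}\rVert_{\ell_0}\le t$ always.

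First I would record the optimality inequality $\lVert\bar{e}\rVert_{\ell_1}\le\lVert e\rVert_{\ell_1}$, valid because $e$ is feasible for the BasP program acting on $e$; since $e$ vanishes off $S$, the right-hand side equals $\lVert e_S\rVert_{\ell_1}$. I would then split $\lVert\bar{e}\rVert_{\ell_1}=\lVert\bar{e}_S\rVert_{\ell_1}+\lVert\bar{e}_{S^c}\rVert_{\ell_1}$ and observe that $\tilde{e}_{S^c}=e_{S^c}-\bar{e}_{S^c}=-\bar{e}_{S^c}$.

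The crux is to show that at least half of the $\ell_1$-mass of $\tilde{e}=e-\bar{e}$ sits on $S$. Rearranging the optimality inequality gives $\lVert e_S\rVert_{\ell_1}-\lVert\bar{e}_S\rVert_{\ell_1}\ge\lVert\bar{e}_{S^c}\rVert_{\ell_1}=\lVert\tilde{e}_{S^c}\rVert_{\ell_1}$, while the reverse triangle inequality applied to $\tilde{e}_S=e_S-\bar{e}_S$ yields $\lVert\tilde{e}_S\rVert_{\ell_1}\ge\lVert e_S\rVert_{\ell_1}-\lVert\bar{e}_S\rVert_{\ell_1}$. Chaining the two gives $\lVert\tilde{e}_S\rVert_{\ell_1}\ge\lVert\tilde{e}_{S^c}\rVert_{\ell_1}$, i.e. $\lVert\tilde{e}_S\rVert_{\ell_1}\ge\tfrac12\lVert\tilde{e}\rVert_{\ell_1}$.

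To finish I would invoke the greedy structure of the median: because it collects the $t$ largest entries in absolute value, the top-$k$ entries of $\tilde{e}$ carry $\ell_1$-mass at least as large as that of any $k$-element subset, in particular at least $\lVert\tilde{e}_S\rVert_{\ell_1}\ge\tfrac12\lVert\tilde{e}\rVert_{\ell_1}$. Hence $k$ entries already meet the half-mass threshold that defines the median, forcing the minimal $t$ to satisfy $t\le k$, whence $\lVert\hat{e}\rVert_{\ell_0}\le t\le\lVert e\rVert_{\ell_0}$. I expect the main obstacle to be the third step — translating BasP optimality into the mass concentration of $\tilde{e}$ on $S$ — which hinges on the correct support splitting and careful sign bookkeeping on $S^c$; once that is in hand, the greedy-median step is essentially immediate.
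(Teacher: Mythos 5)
Your proof is correct and follows essentially the same route as the paper's: both derive $\lVert\tilde{e}_S\rVert_{\ell_1}\ge\lVert\tilde{e}_{S^c}\rVert_{\ell_1}$ by combining the BasP optimality inequality $\lVert\bar{e}\rVert_{\ell_1}\le\lVert e\rVert_{\ell_1}$ with the entrywise reverse triangle inequality on the support of $e$, and then conclude from the greedy definition of the median that $t\le\lVert e\rVert_{\ell_0}$. The only cosmetic difference is that the paper writes the off-support mass as a sum over $J\setminus S_1$ (with $J$ the support of $\bar{e}$) rather than over all of $S^c$, which is the same quantity since $\tilde{e}=-\bar{e}$ there.
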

\begin{proof}
Let $S_1$ denote the support of $e$ and $J$ denote the support of $\bar{e}$. Since $\bar{e}$ is the BasP output for $e$, we have
\[
\lVert e\rVert_{\ell_1}\geq \lVert\bar{e}\rVert_{\ell_1}
\]
Or stating it differently,
\begin{eqnarray}
\sum_{i \in S_1}|e_i|&\geq& \sum_{j \in J}|\bar{e}_j|\\
\Rightarrow \sum_{i \in S_1}(|e_i|-|\bar{e}_i|)&\geq&\sum_{j \in J \setminus S_1}|\bar{e}_j|
\end{eqnarray}
Now consider, $\sum_{i \in S_1}|\tilde{e}_i|$. We have
\begin{eqnarray}
\sum_{i \in S_1}|\tilde{e}_i|&=&\sum_{i \in S_1}|e_i-\bar{e}_i| \geq \sum_{i \in S_1}(|e_i|-|\bar{e}_i|) \geq \sum_{j \in J \backslash S_1}|\bar{e}_j| \nonumber \\
 &=&\sum_{j \in J \backslash S_1}|e_j-\bar{e}_j|=\sum_{j \in J \backslash S_1 }|\tilde{e}_j|  \label{eq1_5}
\end{eqnarray}
Eq. (\ref{eq1_5}) implies that the sum of entries of $\tilde{e}$ in $S_1$ is at least equal to the sum of entries of $\tilde{e}$ over the remaining entries. This implies that the $\ell_0$ norm of the median of $\tilde{e}$ cannot exceed $|S_1|=\lVert e\rVert_{\ell_0}$.
\end{proof}
\begin{cor}
The ISA converges to an instanton in a finite number of steps.
\end{cor}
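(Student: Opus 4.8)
The plan is to exhibit $\lVert e^{(l)}\rVert_{\ell_0}$ as a strictly decreasing integer monovariant and then invoke well-ordering. First I would record the loop invariant that BasP fails on $e^{(l)}$ at the start of every iteration. This holds at $l=0$ by the choice of initialization, and it is preserved through both branches: in case (i) we set $e^{(l)}=\hat{e}^{(l-1)}$, and Fact 1 applies because $F\tilde{e}^{(l-1)}=F(e^{(l-1)}-\bar{e}^{(l-1)})=\tilde{y}-\tilde{y}=0$, so the median of $\tilde{e}^{(l-1)}$ is a BasP failure; in case (ii) we only ever set $e^{(l)}=r^{i_t}$ for an index $i_t$ on which BasP was explicitly verified to fail.

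Next I would use Lemma \ref{lemma1} to argue that cases (i) and (ii) are exhaustive: the lemma gives $\lVert\hat{e}^{(l-1)}\rVert_{\ell_0}\le\lVert e^{(l-1)}\rVert_{\ell_0}$, so the only possibilities are strict inequality (case (i)) or equality (case (ii)). In case (i) the $\ell_0$ norm drops strictly by construction. In case (ii) there are two sub-possibilities: either every single-coordinate deletion $r^{i_t}$ is BasP-correctable, in which case the algorithm halts, or at least one $r^{i_t}$ is a BasP failure, in which case $e^{(l)}=r^{i_t}$ has $\ell_0$ norm exactly $\lVert e^{(l-1)}\rVert_{\ell_0}-1$. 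Hence at every iteration that does not halt, $\lVert e^{(l)}\rVert_{\ell_0}<\lVert e^{(l-1)}\rVert_{\ell_0}$.

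With the monovariant in hand, termination follows quickly. The quantities $\lVert e^{(l)}\rVert_{\ell_0}$ form a strictly decreasing sequence of nonnegative integers, so they cannot decrease forever; moreover the invariant forces $e^{(l)}\ne 0$, since BasP trivially succeeds on the zero vector (whose unique feasible $\ell_1$-minimizer is itself), giving the lower bound $\lVert e^{(l)}\rVert_{\ell_0}\ge 1$. Therefore the algorithm must reach the halting branch of case (ii) after at most $\lVert e^{(0)}\rVert_{\ell_0}-1$ iterations. Finally I would confirm that the halted output is genuinely an instanton in the sense of Definition \ref{def:instanton}: at the halting step BasP fails on $\hat{e}^{(l-1)}$ by Fact 1, while by the halting condition BasP succeeds on every $r^{i_t}$, i.e.\ on every vector obtained from $\hat{e}^{(l-1)}$ by zeroing one nonzero coordinate.

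I do not expect a serious obstacle, since the corollary is essentially a monovariant-plus-well-ordering argument riding on Lemma \ref{lemma1}. The only points requiring care are the two noted above: maintaining the ``BasP fails on $e^{(l)}$'' invariant across both the median and the deletion steps (which rests on the identity $F\tilde{e}^{(l-1)}=0$ that licenses Fact 1), and certifying the lower bound $\lVert e^{(l)}\rVert_{\ell_0}\ge 1$, so that the descending sequence cannot slip past the instanton and must instead trigger the halting condition.
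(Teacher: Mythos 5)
Your proof is correct and takes essentially the same route as the paper's: the paper's two-sentence argument simply observes that $\lVert e^{(l)}\rVert_{\ell_0}$ strictly decreases at every non-halting step and that the halting condition certifies an instanton by construction. Your write-up supplies the details the paper leaves implicit (exhaustiveness of the two cases via Lemma~\ref{lemma1}, preservation of the BasP-failure invariant via Fact~1 and the identity $F\tilde{e}^{(l-1)}=0$, and the lower bound $\lVert e^{(l)}\rVert_{\ell_0}\geq 1$), but the underlying monovariant-plus-well-ordering argument is identical.
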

\begin{proof}
At every step $l$ of the ISA, we have $\lVert e^{(l)}\rVert_{\ell_0}<\lVert e^{(l-1)}\rVert_{\ell_0}$. By construction, the output of the ISA is an instanton.
\end{proof}

\section{Discussion and Future Work}
\label{sec:summary}
We envision further development of the Instanton Search technique along the following lines.

\noindent$\bullet$ Sampling of the instanton space provided by the ISA is not uniform.  An important future task is to design a uniform-sampling modification of the ISA.

\noindent$\bullet$ Studying how ISA complexity scales with the measurement matrix size constitutes another important problem. ISA outputs an instanton in a linear (in the number of columns of measurement matrix) number of steps in the worst case. However, it is still not clear how many instances of the ISA should be executed till the smallest instanton is found. While the answer clearly depends on the particular measurement matrix, to answer this question for an average case would also be of interest.

\noindent$\bullet$ If the error-vector is almost sparse, or alternatively if the measurements are noisy, the problem of exact CS reconstruction is replaced by an approximate reconstruction. One possible modification of the BasP algorithm, the so-called Lasso algorithm \cite{94Tib}, consists in  adding to the $\ell_1$-norm of the signal (the original BasP) a part linear in the $\ell_2$-norm of the noise. With a proper (soft) definition of failure, one should be able to extend the ISA approach to finding the worst (highest probability) configurations leading to Lasso's failures.

 \noindent$\bullet$ The ISA technique allows adaptation to the problem of Matrix Completion via the computationally tractable minimization of the matrix nuclear norm (replacing exact but not tractable minimization of the matrix rank) \cite{08CR,09CT}. In this setting, a sparse and properly conditioned matrix (which one aims to reconstruct) is fixed and the question becomes exploring the set of measurements required for the nuclear-norm minimization to succeed. The desired modification of the ISA should be capable of sampling the most dangerous configurations of measurement,  for example defined as a set of measurements such that their nuclear-norm based decoding leads to a failure,  while addition of any single non-zero measurements results in a successful decoding.

 \noindent$\bullet$ We can also adapt the ISA algorithm to the sequential compressed sensing formulation of Maloutov \textit{et al} \cite{09MSW}.  In fact,  the modification is rather straightforward as it only replaces BasP,  as a sub-step of the ISA, by its sequential implementation (also imitating the order in which the measurements are received). A similar modification can be implemented for any other LP-based modification of BasP,  such as the Subspace Pursuit algorithm discussed in \cite{08DM}. Note that this streamlining of BasP is somehow similar to the adaptive realization of the LP decoding of LDPC codes suggested in \cite{08TS}.

\noindent$\bullet$ However successful the BasP algorithm is, it is still suboptimal with respect to the exact $\ell_0$-norm minimization. It will thus be important to study the gap between exact and suboptimal decodings, in particular constructing a sequence of convex optimizations improving the performance of BasP gradually. This sequence may be designed along the lines of discussion in \cite{09DV}, illustrated on example of the $0-1$ measurement matrix and error-vector that the LP-LDPC is stronger than the respective BasP. The ISA-approach can be easily adapted to such convex improvements over BasP.

\noindent$\bullet$ We envision that the most interesting (but also the most challenging) application of this Instanton Search approach will be in designing a good measurement matrix, in the spirit of how one may think about using the instanton search for selecting a good LDPC-parity check matrix (say picked from given random ensemble optimized with respect to its water-fall behavior) with the lowest error-floor.

\section{Acknowledgments}
The work was carried out under the auspices of the NNSA of the U.S. DOE at LANL under Contract No. DE-AC52-06NA25396. The work by S. K. Chilappagari was performed when he was with the Department of Electrical and Computer Engineering, University of Arizona, Tucson, AZ. The work of M. Chertkov was also funded by UC-LANL Award No. 09-LR-06-118620-SIEP and by NSF CCF-0829945 (via NMC). The work of B. Vasic and S. K. Chilappagari was funded by the NSF under Grants CCF-0634969, IHCS-0725405 and CCF-0830245.



\begin{thebibliography}{10}
\providecommand{\url}[1]{#1}
\csname url@samestyle\endcsname
\providecommand{\newblock}{\relax}
\providecommand{\bibinfo}[2]{#2}
\providecommand{\BIBentrySTDinterwordspacing}{\spaceskip=0pt\relax}
\providecommand{\BIBentryALTinterwordstretchfactor}{4}
\providecommand{\BIBentryALTinterwordspacing}{\spaceskip=\fontdimen2\font plus
\BIBentryALTinterwordstretchfactor\fontdimen3\font minus
  \fontdimen4\font\relax}
\providecommand{\BIBforeignlanguage}[2]{{%
\expandafter\ifx\csname l@#1\endcsname\relax
\typeout{** WARNING: IEEEtran.bst: No hyphenation pattern has been}%
\typeout{** loaded for the language `#1'. Using the pattern for}%
\typeout{** the default language instead.}%
\else
\language=\csname l@#1\endcsname
\fi
#2}}
\providecommand{\BIBdecl}{\relax}
\BIBdecl

\bibitem{06Donoho}
D.~Donoho, ``Compressed sensing,'' \emph{IEEE Trans. Inform. Theory}, vol.~52,
  no.~4, pp. 1289--1306, April 2006.

\bibitem{06Donoho_2}
------, ``For most large underdetermined systems of linear equations, the
  minimal ell-1 norm solution is also the sparsest solution,''
  \emph{Communications on Pure and Applied Mathematics}, vol.~59, no.~6, pp.
  797--829, June 2006.

\bibitem{95Natarajan}
B.~K. Natarajan, ``Sparse approximate solutions to linear systems,'' \emph{SIAM
  J. COmput.}, vol.~24, pp. 227--234, 1995.

\bibitem{99CDS}
S.~S. Chen, D.~L. Donoho, and M.~A. Saunders, ``Atomic decomposition by basis
  pursuit,'' \emph{SIAM Journal on Scientific Computing}, vol.~20, no.~1, pp.
  33--61, 1998.

\bibitem{05CT}
E.~J. Candes and T.~Tao, ``Decoding by linear programming,'' \emph{IEEE Trans.
  Inform. Theory}, vol.~51, no.~12, pp. 4203--4215, Dec. 2005.

\bibitem{96SS}
M.~Sipser and D.~Spielman, ``Expander codes,'' \emph{IEEE Trans. Inform.
  Theory}, vol.~42, no.~6, pp. 1710--1722, 1996.

\bibitem{63Gallager}
R.~G. Gallager, \emph{Low Density Parity Check Codes}.\hskip 1em plus 0.5em
  minus 0.4em\relax Cambridge, MA: M.I.T. Press, 1963.

\bibitem{09JXHC}
S.~Jafarpour, W.~Xu, B.~Hassibi, and R.~Calderbank, ``Efficient and robust
  compressed sensing using optimized expander graphs,'' \emph{IEEE Trans.
  Inform. Theory}, vol.~55, no.~9, pp. 4299--4308, Sep. 2009.

\bibitem{08Indyk}
P.~Indyk, ``Explicit constructions for compressed sensing of sparse signals,''
  in \emph{Proc. 19th Annual ACM-SIAM Symposium on Discrete Algorithms}, 2008,
  pp. 30--33.

\bibitem{09DV}
A.~G. Dimakis and P.~Vontobel, ``Lp decoding meets lp decoding: a connection
  between channel coding and compressed sensing,'' in \emph{Proc. of the 47th
  Allerton Conference on Communications, Control, and Computing}, Monticello,
  Illinois, USA, Sep. 30-Oct.2 2009.

\bibitem{05FWK}
J.~Feldman, M.~Wainwright, and D.~Karger, ``Using linear programming to decode
  binary linear codes,'' \emph{IEEE Trans. Inform. Theory}, vol.~51, no.~3, pp.
  954--972, March 2005.

\bibitem{88Pearl}
J.~Pearl, \emph{Probablisitic Reasoning in Intelligent Systems}.\hskip 1em plus
  0.5em minus 0.4em\relax San Francisco, CA: Kaufmann, 1988.

\bibitem{03Richardson}
\BIBentryALTinterwordspacing
T.~J. Richardson, ``Error floors of {LDPC} codes,'' in \emph{Proc. 41st Annual
  Allerton Conf. on Communications, Control and Computing}, 2003, pp.
  1426--1435. [Online]. Available:
  \url{http://www.hpl.hp.com/personal/Pascal\_Vontobel/pseudocodewords/papers}
\BIBentrySTDinterwordspacing

\bibitem{04CCSV}
V.~Chernyak, M.~Chertkov, M.~G. Stepanov, and B.~Vasic, ``Error correction on a
  tree: An instanton approach,'' \emph{Phys. Rev. Lett.}, vol.~93, no.~19, p.
  198702, Nov 2004.

\bibitem{05SCCV}
M.~G. Stepanov, V.~Chernyak, M.~Chertkov, and B.~Vasic, ``Diagnosis of
  weaknesses in modern error correction codes: A physics approach,''
  \emph{Phys. Rev. Lett.}, vol.~95, p. 228701, Nov. 2005.

\bibitem{08CS}
M.~Chertkov and M.~Stepanov, ``An efficient pseudocodeword search algorithm for
  linear programming decoding of {LDPC} codes,'' \emph{IEEE Trans. Inform.
  Theory}, vol.~54, no.~4, pp. 1514--1520, April 2008.

\bibitem{08CCV}
\BIBentryALTinterwordspacing
S.~K. Chilappagari, M.~Chertkov, and B.~Vasic, ``Provably efficient instanton
  search algorithm for {LP} decoding of {LDPC} codes over the {BSC},'' Sept.
  2008, submitted to \textit{IEEE Trans. Inform. Theory}. [Online]. Available:
  \url{http://arxiv.org/abs/0808.2515}
\BIBentrySTDinterwordspacing

\bibitem{09CVSC}
S.~K. Chilappagari, B.~Vasic, M.~Stepanov, and M.~Chertkov, ``Analysis of error
  floor of {LDPC} codes under {LP} decoding over the {BSC},'' in \emph{Proc.
  IEEE International Symposium on Information Theory (ISIT '09)}, Seoul, Korea,
  July 2009, pp. 379--383.

\bibitem{10DPF}
C.~Dossal, G.~Peyr\'e, and M.~Fadili, ``A numerical exploration of compressed
  sampling recovery,'' \emph{Linear Algebra and its Applications}, vol. 432,
  no.~7, pp. 1663--1679, 2010.

\bibitem{CS_webpage}
\BIBentryALTinterwordspacing
``Compressive sensing resources.'' [Online]. Available:
  \url{http://www.dsp.ece.rice.edu/cs}
\BIBentrySTDinterwordspacing

\bibitem{l1magic}
\BIBentryALTinterwordspacing
``l1-magic.'' [Online]. Available: \url{http://www.acm.caltech.edu/l1magic/}
\BIBentrySTDinterwordspacing

\bibitem{94Tib}
R.~Tibshirani, ``Regression shrinkage and selection via the {L}asso,''
  \emph{Journal of the Royal Statistical Society, Series B}, vol.~58, pp.
  267--288, 1996.

\bibitem{08CR}
E.~J. Candes and B.~Recht, ``Exact matrix completion via convex optimization,''
  \emph{Found. of Comput. Math.}, vol.~9, pp. 717--772, 2009.

\bibitem{09CT}
E.~J. Candes and T.~Tao, ``The power of convex relaxation: Near-optimal matrix
  completion,'' 2009, to appear in \textit{IEEE Trans. Inform. Theory}.

\bibitem{09MSW}
\BIBentryALTinterwordspacing
D.~M. Malioutov, S.~R. Sanghavi, and A.~S. Willsky, ``Sequential compressed
  sensing,'' 2009. [Online]. Available:
  \url{http://users.ece.utexas.edu/~sanghavi/cs_journal_final.pdf}
\BIBentrySTDinterwordspacing

\bibitem{08DM}
\BIBentryALTinterwordspacing
W.~{Dai} and O.~{Milenkovic}, ``{Subspace Pursuit for Compressive Sensing
  Signal Reconstruction},'' \emph{ArXiv e-prints}, Mar 2008. [Online].
  Available: \url{http://arxiv.org/PS_cache/arxiv/pdf/0803/0803.0811v3.pdf}
\BIBentrySTDinterwordspacing

\bibitem{08TS}
M.~Taghavi, A.~Shokrollahi, and P.~Siegel, ``Efficient implementation of linear
  programming decoding,'' in \emph{Proc. of 46th Annual Allerton Conference on
  Communication, Control, and Computing, 2008}, Sept. 2008, pp. 402--409.

\end{thebibliography}
\end{document}